\documentclass[11pt]{article}

\usepackage[margin=1in]{geometry}
\usepackage{amsmath,amssymb,amsthm,mathtools}
\usepackage{booktabs}
\usepackage{enumitem}
\usepackage{array}
\usepackage[colorlinks=true,linkcolor=blue,citecolor=blue,urlcolor=blue]{hyperref}
\allowdisplaybreaks
\usepackage{natbib}

\newtheorem{definition}{Definition}
\newtheorem{proposition}{Proposition}

\newtheorem{remark}{Remark}
\newtheorem{corollary}{Corollary}

\newcommand{\E}{\mathbb E}
\newcommand{\given}{\mid}

\newcommand{\inn}{\mathrm{in}}
\newcommand{\CE}{\mathrm{CE}}

\newcommand{\CO}{\mathrm{CO}}

\title{Correlated optimin}
\author{Mehmet Mars Seven\thanks{Department of Political Economy, King's College London, UK. E-mail: mehmet.mars.seven@kcl.ac.uk}}
\date{}

\begin{document}
	\maketitle
	
	\begin{abstract}
		We extend the optimin notion of \citet{ismail2025} from mixed strategy profiles to correlated distributions. A correlated distribution is evaluated by the worst expected payoff each player can receive when opponents may either obey their private recommendations or make unilateral recommendation-contingent deviations that are strictly profitable under the posterior induced by the distribution. Correlated optimins are Pareto optimal with respect to this vector of guaranteed payoffs. We show that a correlated optimin exists in every finite game. In addition, for every correlated equilibrium, there exists a correlated optimin such that every player's guaranteed payoff is weakly higher than his or her correlated equilibrium payoff. In two-player zero-sum games, correlated optimin coincides with correlated equilibrium and yields the maximin value. Outside zero-sum games, correlated optimin may strictly improve upon all correlated equilibria. We illustrate this with a simple $2\times2$ game with a unique correlated and coarse correlated equilibrium, in which there exists a correlated optimin that strictly Pareto dominates the equilibrium payoff. JEL: C70, D81
	\end{abstract}
	
	\noindent \emph{Keywords}: Correlated equilibrium, optimin, Nash equilibrium, solution concept
	
	\section{Introduction}
	
	Correlated equilibrium extends Nash equilibrium by allowing players to condition their actions on private recommendations generated by a common random device \citep{nash1951,aumann1974}. \citet{aumann1974} showed that correlated equilibria can support payoffs outside the convex hull of Nash equilibrium payoffs and can, in fact, strictly Pareto dominate all Nash equilibria in a game. \citet{moulin1978} extended Aumann's notion to coarse correlated equilibrium, in which no player has an incentive to deviate from committing to the private recommendations before their realization.
	
	This paper asks a complementary question. Instead of requiring obedience after every recommendation, we evaluate a correlated distribution by the payoffs it guarantees when opponents may exploit the recommendation device in individually profitable ways. The starting point is the optimin criterion of \citet{ismail2025}, which evaluates a strategy profile by each player's worst payoff under unilateral profitable deviations by the other players and selects Pareto optimal profiles for this guarantee vector. We extend this criterion from independent mixed profiles to correlated distributions.
	
	The key difference from the mixed-profile setting is that, under a correlated distribution, a unilateral deviation need not be a fixed alternative action; it may depend on the player's privately observed recommendation. Given a distribution $P$, if player $j$ receives recommendation $a_j$, then $j$ may either obey or switch to an action that strictly improves $j$'s conditional expected payoff under the posterior induced by $P$. Player $i$'s correlated optimin guarantee is the worst ex ante payoff obtained when the other players choose admissible recommendation-contingent deviations. A correlated optimin is Pareto optimal with respect to this vector of guaranteed payoffs.
	
	We show that a correlated optimin exists in every finite game. In addition, for every correlated equilibrium, there exists a correlated optimin such that every player's guaranteed payoff is weakly higher than his or her correlated-equilibrium payoff. In two-player zero-sum games, correlated optimin coincides with correlated equilibrium and yields the maximin value. Outside zero-sum games, correlated optimin may strictly improve upon all correlated equilibria. The illustrative $2\times2$ game below has a correlated optimin point that strictly Pareto dominates the unique correlated equilibrium.
	
	\[
	\begin{array}{c|cc}
		& L & R\\
		\hline
		U & (0,1) & (1,0)\\
		D & (1,0) & (0,2)
	\end{array}.
	\]
	This game has a unique mixed Nash equilibrium, given by $\left[\left(\frac23,\frac13\right), \left(\frac12,\frac12\right)\right]$, with payoff vector $\left(\frac12,\frac23\right).$
	It also has a unique correlated equilibrium,
	\[
	P^{CE}=
	\begin{pmatrix}
		1/3&1/3\\
		1/6&1/6
	\end{pmatrix},
	\]
	whose payoff vector is $\left(\frac12,\frac23\right).$ However, the correlated distribution
	\[
	P=
	\begin{pmatrix}
		7/30&3/10\\
		7/30&7/30
	\end{pmatrix}
	\]
	is a correlated optimin point for which both the expected payoffs and the guaranteed payoffs coincide at $\left(\frac8{15},\frac7{10}\right)$,
	which strictly Pareto dominates the correlated equilibrium payoff.
	
	The intuition is simple. Under $P$, player $1$ has no strictly profitable unilateral deviation after either recommendation. By contrast, player $2$ has a strictly profitable recommendation-contingent deviation only when recommendation $L$ is received, in which case switching from $L$ to $R$ is profitable. Nevertheless, even after allowing player $2$'s profitable deviation, player $1$ still guarantees the payoff $8/15$. Thus, $P$ improves not only the actual payoffs relative to correlated equilibrium, but also the players' guaranteed payoffs under unilateral profitable deviations.
	
	\section{Setup}
	
	Let $\Gamma=(N,(A_i)_{i\in N},(u_i)_{i\in N})$ be a finite normal-form game.  The player set is $N=\{1,\dots,n\}$, $n\ge1$. Each $A_i$ is finite and nonempty, and $A=\prod_{i\in N}A_i$ is the finite set of pure action profiles.  Payoffs are real-valued functions $u_i:A\to\mathbb R$.
	Since $A$ is finite, all payoffs are bounded and all expectations below are finite.
	
	A pure action profile is written $a=(a_i,a_{-i})$, and $a_{-i}\in A_{-i}:=\prod_{j\ne i}A_j$.
	A \emph{correlated distribution} is a probability distribution $P\in\Delta(A)$. The interpretation is that a mediator draws $a\sim P$ and privately recommends $a_i$ to player $i$.
	
	For each recommendation $a_i\in A_i$, define the marginal recommendation probability
	\[
	P_i(a_i)=\sum_{a_{-i}\in A_{-i}}P(a_i,a_{-i}).
	\]
	If $P_i(a_i)=0$, the posterior is not defined. If $P_i(a_i)>0$, the posterior over the other players' recommendations is
	\[
	P(a_{-i}\given a_i)
	=
	\frac{P(a_i,a_{-i})}{P_i(a_i)}.
	\]
	
	A player who receives recommendation $a_i$ is assumed to know both the recommendation and the distribution $P$.
	
	Given a set $X$ and a vector-valued function $f:X\to\mathbb R^n$, a point $\bar x\in X$ is \emph{Pareto optimal} (undominated) if there is no $y\in X$ such that $f_i(y)\ge f_i(\bar x)$ for every $i\in N$, with strict inequality for at least one player. 
	
	\subsection{Correlated and coarse correlated equilibrium}
	
	\begin{definition}[Correlated equilibrium]
		A distribution $P\in\Delta(A)$ is a \emph{correlated equilibrium} if, for every player $i\in N$, every recommendation $a_i\in A_i$ with $P_i(a_i)>0$, and every deviation $b_i\in A_i$,
		\[
		\E_P[u_i(a_i,a_{-i})\given a_i]
		\ge
		\E_P[u_i(b_i,a_{-i})\given a_i].
		\]
	\end{definition}
	
	Multiplying by $P_i(a_i)>0$ gives the equivalent joint-probability inequality
	\[
	\sum_{a_{-i}\in A_{-i}}
	P(a_i,a_{-i})
	\bigl[u_i(a_i,a_{-i})-u_i(b_i,a_{-i})\bigr]
	\ge0.
	\]
	It is well-known that the set of correlated equilibria of $\Gamma$, denoted as $\CE(\Gamma)$, is a polytope.
	
	\begin{definition}[Coarse correlated equilibrium]
		A distribution $P\in\Delta(A)$ is a \emph{coarse correlated equilibrium} if, for every player $i\in N$ and every fixed action $b_i\in A_i$,
		\[
		\sum_{a\in A}P(a)u_i(a)
		\ge
		\sum_{a\in A}P(a)u_i(b_i,a_{-i}).
		\]
	\end{definition}
	
	The difference between the two concepts is the timing of deviations.  Correlated equilibrium compares obedience with deviations after the recommendation is observed, whereas coarse correlated equilibrium compares obedience with fixed commitments chosen before the recommendation is observed.
	
	A mixed Nash equilibrium $p=(p_i)_{i\in N}$ induces the product distribution $P(a)=\prod_i p_i(a_i)$.  If $p$ is Nash, the induced distribution is a correlated equilibrium.  Conditional on any positive-probability recommendation $a_i$, independence leaves player $i$'s posterior over $a_{-i}$ equal to $p_{-i}$, and the Nash best-response condition gives the correlated equilibrium inequalities.

	\section{Main definition}
	In the original optimin criterion, a mixed profile is evaluated by the worst-case payoff player $i$ can receive, if the other players either keep their original strategies or make unilateral deviations that are strictly profitable relative to the original profile. 
	
	We next extend this idea to correlated strategies. The main change is that a deviation is now allowed to depend on the recommendation privately observed by the deviating player.  Thus, admissibility is checked recommendation by recommendation, while the payoff of the evaluated player is computed after all other players' admissible rules are applied simultaneously.
	
	Fix $P\in\Delta(A)$.  If player $j$ receives recommendation $a_j$, they may either obey $a_j$ or switch to a pure action that is strictly profitable given the posterior induced by $P$.
	
	For player $j$, recommendation $a_j\in A_j$, and alternative action $b_j\in A_j$, define the unnormalized gain from switching to $b_j$ by
	\[
	\sum_{a_{-j}\in A_{-j}}
	P(a_j,a_{-j})
	\bigl[u_j(b_j,a_{-j})-u_j(a_j,a_{-j})\bigr].
	\]
	If $P_j(a_j)>0$, this expression is $P_j(a_j)$ times the conditional expected payoff gain from deviating after recommendation $a_j$.  Therefore, strict positivity is equivalent to strict conditional profitability.  If $P_j(a_j)=0$, the expression is zero for every $b_j$, so no strict deviation is added at that recommendation.
	
	Define
	\[
	B_j^P(a_j)
	=
	\{a_j\}
	\cup
	\left\{
	b_j\in A_j:
	\sum_{a_{-j}\in A_{-j}}
	P(a_j,a_{-j})
	\bigl[u_j(b_j,a_{-j})-u_j(a_j,a_{-j})\bigr]>0
	\right\}.
	\]
	Thus, $B_j^P(a_j)$ contains obedience plus all strictly profitable pure deviations after recommendation $a_j$.  At zero-probability recommendations the definition gives $B_j^P(a_j)=\{a_j\}$.
	
	\begin{remark}
		Since zero-probability recommendations never occur under $P$, they do not affect realized payoffs.  We therefore adopt the convention $B_j^P(a_j)=\{a_j\}$ whenever $P_j(a_j)=0$, so that only obedience is admissible off support.
	\end{remark}
	
	For player $i$, define the admissible pure recommendation-contingent profiles of the other players by
	\[
	B_{-i}^P
	:=
	\prod_{j\ne i}\prod_{a_j\in A_j}B_j^P(a_j).
	\]
	An element $\delta_{-i}\in B_{-i}^P$ is interpreted as a family of choices
	\[
	\delta_j(a_j)\in B_j^P(a_j),
	\qquad j\ne i,
	\quad a_j\in A_j.
	\]
	Equivalently, $\delta_j$ is a recommendation-contingent rule $A_j\to A_j$.  Given $a_{-i}=(a_j)_{j\ne i}$, write
	\[
	\delta_{-i}(a_{-i})=(\delta_j(a_j))_{j\ne i}.
	\]
	The obedient profile $\delta_j(a_j)=a_j$ for all $j\ne i$ and all $a_j$ is always admissible.  Since the action sets are finite, $B_{-i}^P$ is finite and nonempty.
	
	This construction respects the private-information structure of correlated equilibrium: player $j$'s action may depend on $j$'s own recommendation $a_j$, but not on recommendations observed only by other players.
	
	\begin{definition}[Correlated optimin performance]
		For $P\in\Delta(A)$, define
		\[
		\pi_i(P)
		=
		\min_{\delta_{-i}\in B_{-i}^P}
		\sum_{a\in A}P(a)
		u_i\bigl(a_i,\delta_{-i}(a_{-i})\bigr).
		\]
		A distribution $\bar P\in\Delta(A)$ is a \emph{correlated optimin} if it is Pareto optimal with respect to
		\[
		\pi(P)
		=
		(\pi_1(P),\dots,\pi_n(P)).
		\]
		The set of correlated optimins is denoted $\CO(\Gamma)$.
	\end{definition}
	
	Note that conditional probabilities enter in determining which deviations are admissible.  Once an admissible global deviation profile $\delta_{-i}$ has been fixed, the realized payoff to player $i$ is the random variable $u_i(a_i,\delta_{-i}(a_{-i}))$ under the original draw $a\sim P$.  Therefore the payoff is the ordinary ex ante expectation $\sum_{a\in A}P(a)u_i(a_i,\delta_{-i}(a_{-i}))$.
	
	Note also that the minimization is outside this expectation: a single global admissible deviation profile is chosen against the whole distribution. This is the closest correlated analogue of the original optimin criterion. The alternative `inside' formulation, discussed in the appendix, chooses a possibly different minimizing rule after each recommendation of player $i$ and is therefore more pessimistic. We introduce this version as an additional robustness check because, it is not reasonable to condition player $j$'s deviation on the information provided to player $i\neq j$.
	
	\section{Main results}
	
	\begin{proposition}[Performance at correlated equilibria]\label{prop:collapse}
		If $P\in\CE(\Gamma)$, then for every player $i$,
		\[
		\pi_i(P)=\sum_{a\in A}P(a)u_i(a).
		\]
	\end{proposition}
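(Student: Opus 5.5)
The plan is to show that at a correlated equilibrium the admissible deviation sets collapse to obedience, so the minimization defining $\pi_i(P)$ is over a single element. Fix $P\in\CE(\Gamma)$, a player $j$, and a recommendation $a_j\in A_j$. I will show $B_j^P(a_j)=\{a_j\}$ by splitting into two cases.

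\emph{Zero-probability recommendations.} If $P_j(a_j)=0$, then $B_j^P(a_j)=\{a_j\}$ holds directly by the convention in the Remark. Equivalently, every summand $P(a_j,a_{-j})$ vanishes, so the unnormalized gain is $0$ for every $b_j$ and no strict deviation is added.

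\emph{Positive-probability recommendations.} If $P_j(a_j)>0$, the joint-probability form of the correlated equilibrium inequality gives, for every $b_j\in A_j$,
\[
\sum_{a_{-j}\in A_{-j}}P(a_j,a_{-j})\bigl[u_j(b_j,a_{-j})-u_j(a_j,a_{-j})\bigr]\le 0 .
\]
This is exactly the negation of the strict inequality defining the second set in $B_j^P(a_j)$. That set is therefore empty, and again $B_j^P(a_j)=\{a_j\}$.

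Combining the two cases, for every $i$ the product $B_{-i}^P=\prod_{j\ne i}\prod_{a_j}B_j^P(a_j)$ is a singleton. Its only element is the obedient profile $\delta_j(a_j)=a_j$, so $\delta_{-i}(a_{-i})=a_{-i}$ for all $a_{-i}$. The minimum in the definition of $\pi_i(P)$ is thus attained at this single profile, and
\[
\pi_i(P)=\sum_{a\in A}P(a)u_i(a_i,a_{-i})=\sum_{a\in A}P(a)u_i(a).
\]

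There is no real obstacle here. The only point needing care is that admissibility uses a \emph{strict} gain, so weak indifference at a correlated equilibrium does not enlarge $B_j^P(a_j)$, and that the off-support case is handled by the stated convention.
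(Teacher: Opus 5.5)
Your proposal is correct and follows essentially the same argument as the paper. You show that $B_j^P(a_j)=\{a_j\}$ for every $j$ and $a_j$, using the weak correlated equilibrium inequality in joint-probability form on the support and the vanishing joint gain off the support. The admissible set then collapses to obedience, and $\pi_i(P)$ equals the expected payoff.
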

	
	\begin{proof}
		Fix a correlated equilibrium $P$.  For every player $j$, every recommendation $a_j$ with $P_j(a_j)>0$, and every action $b_j\in A_j$, correlated equilibrium gives
		\[
		\E_P[u_j(a_j,a_{-j})\given a_j]
		\ge
		\E_P[u_j(b_j,a_{-j})\given a_j].
		\]
		Multiplying by $P_j(a_j)>0$ gives
		\[
		\sum_{a_{-j}}P(a_j,a_{-j})
		\bigl[u_j(b_j,a_{-j})-u_j(a_j,a_{-j})\bigr]
		\le0.
		\]
		If $P_j(a_j)=0$, the same joint-gain expression is exactly zero for every $b_j$.  Hence no action $b_j\ne a_j$ satisfies the strict inequality defining $B_j^P(a_j)$, and therefore
		\[
		B_j^P(a_j)=\{a_j\}
		\quad\text{for every }j\text{ and every }a_j.
		\]
		It follows that every admissible recommendation-contingent profile in $B_{-i}^P$ agrees with obedience on every recommendation.  Substituting $\delta_{-i}(a_{-i})=a_{-i}$ into the definition of $\pi_i$ yields
		\[
		\pi_i(P)
		=
		\sum_{a\in A}P(a)u_i(a_i,a_{-i})
		=
		\sum_{a\in A}P(a)u_i(a),
		\]
		as desired.
	\end{proof}
	
	At a correlated equilibrium, no recommendation creates a strictly profitable unilateral deviation.  Since the correlated optimin admissible set adds only strictly profitable deviations, the admissible set collapses to obedience.  The worst case is then the same as the realized expected payoff.
	
	\begin{proposition}[Upper semicontinuity]\label{prop:usc}
		For every player $i$, the function
		\[
		P\mapsto \pi_i(P)
		\]
		is upper semicontinuous on $\Delta(A)$.
	\end{proposition}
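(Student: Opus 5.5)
We argue by sequences, since $\Delta(A)$ is a compact subset of a finite-dimensional space. Fix $i$ and a sequence $P^k\to P$ in $\Delta(A)$. The goal is
\[
\limsup_{k\to\infty}\pi_i(P^k)\le \pi_i(P).
\]
Upper semicontinuity of a minimum over a feasible set that depends on $P$ typically follows from two facts: each objective is continuous in $P$, and the feasible set is \emph{lower hemicontinuous}, meaning any choice feasible at $P$ stays feasible near $P$. The plan establishes these two facts and then combines them.

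\textbf{Step 1 (lower hemicontinuity of admissible sets).} For each $j$, $a_j$ and $b_j\neq a_j$, the gain
\[
g_j(P;a_j,b_j)=\sum_{a_{-j}}P(a_j,a_{-j})\bigl[u_j(b_j,a_{-j})-u_j(a_j,a_{-j})\bigr]
\]
is linear, hence continuous, in $P$. Membership $b_j\in B_j^P(a_j)$ for $b_j\ne a_j$ is the strict inequality $g_j(P;a_j,b_j)>0$, which defines an open set of $P$. Since $A$ is finite, there is a $K$ such that for all $k\ge K$,
\[
B_j^{P}(a_j)\subseteq B_j^{P^k}(a_j)\quad\text{for every } j \text{ and } a_j.
\]
The obedient choice $a_j$ always belongs to both sets. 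Taking products gives $B_{-i}^P\subseteq B_{-i}^{P^k}$ for $k\ge K$. Zero-probability recommendations cause no difficulty: at such $a_j$ the gain is zero, so $B_j^P(a_j)=\{a_j\}$, and the inclusion holds trivially.

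\textbf{Step 2 (continuity of the objective).} For a fixed rule $\delta_{-i}$, the map
\[
P\mapsto V_i(P,\delta_{-i}) := \sum_a P(a)\,u_i(a_i,\delta_{-i}(a_{-i}))
\]
is linear in $P$, hence continuous.

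\textbf{Step 3 (combine).} Choose $\delta^*_{-i}\in B_{-i}^P$ attaining the minimum that defines $\pi_i(P)$; it exists because $B_{-i}^P$ is finite and nonempty. By Step 1, $\delta^*_{-i}$ is feasible for $P^k$ when $k\ge K$, so
\[
\pi_i(P^k)\le V_i(P^k,\delta^*_{-i}).
\]
Letting $k\to\infty$ and using Step 2,
\[
\limsup_k \pi_i(P^k)\le V_i(P,\delta^*_{-i})=\pi_i(P).
\]

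The only real point is Step 1. The direction of inclusion matters: admissible sets can grow in the limit but cannot shrink. Near $P$, new strict deviations may appear. For example, along a sequence $P^k\to P$ with $P\in\CE(\Gamma)$, a weak indifference at $P$ can become strictly profitable at $P^k$. This is exactly why only upper, not full, semicontinuity holds, and the argument does not need the reverse inclusion. I would add a remark noting that $\pi_i$ can jump upward at the limit, so lower semicontinuity fails in general.
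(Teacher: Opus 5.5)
Your proof is correct and follows essentially the same route as the paper's. Both fix a minimizing admissible rule at $P$, keep it admissible at $P^k$ for large $k$ because its finitely many strict linear profitability inequalities persist, and pass to the limit using linearity of the objective. The only cosmetic difference is that you show the entire admissible set $B_{-i}^P$ persists rather than just the chosen minimizer, which by finiteness amounts to the same argument.
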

	
	\begin{proof}
		Let $P^m\to P$ in $\Delta(A)$.  We must show
		\[
		\limsup_{m\to\infty}\pi_i(P^m)
		\le
		\pi_i(P).
		\]
		Because $B_{-i}^P$ is finite and nonempty, the minimum defining $\pi_i(P)$ is attained.  Choose $\delta_{-i}\in B_{-i}^P$
		such that
		\[
		\pi_i(P)
		=
		\sum_{a\in A}P(a)u_i(a_i,\delta_{-i}(a_{-i})).
		\]
		
		The key step is to prove that this same rule profile remains admissible for all sufficiently large $m$.  Fix an opponent $j\ne i$ and a recommendation $a_j\in A_j$.
		
		If $\delta_j(a_j)=a_j$, this component is obedient and hence belongs to $B_j^{P^m}(a_j)$ for every distribution $P^m$.
		
		If instead $\delta_j(a_j)=b_j\ne a_j$, then admissibility under $P$ means that the strict joint-gain inequality holds:
		\[
		\sum_{a_{-j}\in A_{-j}}
		P(a_j,a_{-j})
		\bigl[u_j(b_j,a_{-j})-u_j(a_j,a_{-j})\bigr]>0.
		\]
		In particular, this case cannot occur at a zero-probability recommendation under $P$, because then the left-hand side would be zero.  The left-hand side is a finite linear function of the coordinates of $P$.  Therefore it is continuous in $P$.  Since it is strictly positive at $P$, it remains strictly positive at $P^m$ for all sufficiently large $m$:
		\[
		\sum_{a_{-j}\in A_{-j}}
		P^m(a_j,a_{-j})
		\bigl[u_j(b_j,a_{-j})-u_j(a_j,a_{-j})\bigr]>0.
		\]
		Thus, $b_j\in B_j^{P^m}(a_j)$ for all sufficiently large $m$.
		
		There are only finitely many pairs $(j,a_j)$ for $j\ne i$.  Taking the maximum of the finitely many thresholds obtained above, we conclude that $\delta_{-i}\in B_{-i}^{P^m}$ for all sufficiently large $m$.
		
		Hence, for all sufficiently large $m$,
		\[
		\pi_i(P^m)
		\le
		\sum_{a\in A}P^m(a)u_i(a_i,\delta_{-i}(a_{-i})).
		\]
		The right-hand side is a finite linear function of $P^m$, so it converges to
		\[
		\sum_{a\in A}P(a)u_i(a_i,\delta_{-i}(a_{-i}))
		=
		\pi_i(P).
		\]
		Taking limit superior gives
		\[
		\limsup_m\pi_i(P^m)
		\le
		\pi_i(P),
		\]
		as required.
	\end{proof}

	A lower jump may occur if new profitable deviations become admissible near $P$, since the minimum is then taken over a larger set of deviation rules.  Upper semicontinuity permits such downward jumps but rules out upward jumps.  Every nonobedient component of a minimizing admissible rule at $P$ is supported by a strict profitability inequality, and strict inequalities persist under sufficiently small perturbations.  Consequently, the same minimizing rule remains admissible at nearby distributions, yielding a uniform upper bound on nearby values of $\pi_i$.
	
	\begin{proposition}[Existence]\label{prop:existence}
		The set $\CO(\Gamma)$ is nonempty.
	\end{proposition}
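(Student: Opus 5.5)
The plan is to obtain a Pareto optimal point for $\pi$ by maximizing a strictly monotone, upper semicontinuous aggregate of the guarantees over the compact simplex $\Delta(A)$. Concretely, we consider the utilitarian sum
\[
W(P)=\sum_{i\in N}\pi_i(P),\qquad P\in\Delta(A).
\]
First, $\Delta(A)$ is a nonempty compact subset of $\mathbb R^{|A|}$, since $A$ is finite. Second, each $\pi_i$ is real-valued; it is bounded between $\min_a u_i(a)$ and $\max_a u_i(a)$, since the minimum is taken over a finite nonempty set of weighted averages of payoffs. By Proposition~\ref{prop:usc}, each $\pi_i$ is upper semicontinuous on $\Delta(A)$. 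A finite sum of real-valued upper semicontinuous functions is upper semicontinuous, because $\limsup$ is subadditive. So $W$ is upper semicontinuous on a compact set.

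By the Weierstrass theorem for upper semicontinuous functions, $W$ attains its maximum at some $\bar P\in\Delta(A)$. To recall the argument: take a maximizing sequence $P^m$ with $W(P^m)\to\sup W$, which is finite by boundedness. Extract a convergent subsequence $P^m\to\bar P$ by compactness. Upper semicontinuity then gives $W(\bar P)\ge\limsup_m W(P^m)=\sup W$.

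It remains to show that $\bar P$ is Pareto optimal with respect to $\pi$. Suppose some $Q\in\Delta(A)$ satisfied $\pi_i(Q)\ge\pi_i(\bar P)$ for all $i$, with strict inequality for at least one player. Summing over players gives $W(Q)>W(\bar P)$, which contradicts maximality. Hence $\bar P\in\CO(\Gamma)$.

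The only substantive step is the attainment of the maximum. The $\pi_i$ are genuinely discontinuous, since they can jump downward when new strict deviations become admissible, so plain continuity is unavailable. Proposition~\ref{prop:usc} supplies exactly the one-sided regularity needed. Because the aggregator maximized is the plain sum, which is strictly increasing in each coordinate, no further argument is needed for Pareto optimality.
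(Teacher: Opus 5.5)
Your proposal is correct and follows the paper's proof essentially verbatim: you maximize the upper semicontinuous sum $W(P)=\sum_i\pi_i(P)$ over the compact simplex using Proposition~\ref{prop:usc}, then show that a maximizer cannot be Pareto dominated. Your extra details (boundedness of each $\pi_i$, subadditivity of $\limsup$, and the maximizing-sequence argument for attainment) fill in steps the paper leaves implicit and are all correct.
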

	
	\begin{proof}
		Consider the scalar function
		\[
		W(P)=\sum_{i\in N}\pi_i(P).
		\]
		Each $\pi_i$ is upper semicontinuous by Proposition~\ref{prop:usc}, so $W$ is upper semicontinuous. Since $\Delta(A)$ is compact, $W$ attains a maximum on $\Delta(A)$. Let $\bar P$ be a maximizer.
		
		If $\bar P$ were Pareto dominated with respect to $\pi$, then there would exist $Q\in\Delta(A)$ such that $\pi_i(Q)\ge\pi_i(\bar P)$ for every $i$, with strict inequality for at least one player. Summing over players would give $W(Q)>W(\bar P)$, contradicting the maximality of $\bar P$. Therefore $\bar P\in\CO(\Gamma)$.
	\end{proof}
	
	The proof uses a standard compactness argument.  Upper semicontinuity ensures that the aggregate performance function $W$ attains a maximum on the compact simplex $\Delta(A)$.  The function $W$ is used only as a selection device: any maximizer of $W$ must be Pareto optimal with respect to $\pi$, because a distribution that weakly improved every player's guaranteed payoff and strictly improved at least one would necessarily yield a strictly larger value of $W$.
	
	\begin{corollary}[Correlated equilibrium domination]\label{cor:ce-improvement}
		For every correlated equilibrium $P^{\CE}\in\CE(\Gamma)$, there exists a correlated optimin $\bar P\in\CO(\Gamma)$ such that
		\[
		\pi_i(\bar P)
		\ge
		\sum_{a\in A}P^{\CE}(a)u_i(a)
		\quad\text{for every }i\in N.
		\]
		Moreover,
		\[
		\sum_{a\in A}\bar P(a)u_i(a)
		\ge
		\pi_i(\bar P)
		\quad\text{for every }i,
		\]
		so the same distribution also weakly improves every player's ordinary expected payoff relative to $P^{\CE}$.
	\end{corollary}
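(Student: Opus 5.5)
The plan is to combine Proposition~\ref{prop:collapse} with a constrained version of the compactness argument used for Proposition~\ref{prop:existence}. Fix $P^{\CE}\in\CE(\Gamma)$ and write $v_i=\sum_{a}P^{\CE}(a)u_i(a)$. By Proposition~\ref{prop:collapse}, $\pi_i(P^{\CE})=v_i$ for every $i$. Consider the set
\[
K=\{P\in\Delta(A):\ \pi_i(P)\ge v_i \text{ for every } i\in N\}.
\]
It contains $P^{\CE}$, so it is nonempty. Each set $\{P:\pi_i(P)\ge v_i\}$ is an upper level set of an upper semicontinuous function (Proposition~\ref{prop:usc}), so it is closed. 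Hence $K$ is a closed subset of the compact simplex and is itself compact.

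Next I maximize $W(P)=\sum_i\pi_i(P)$ over $K$. Since $W$ is upper semicontinuous, a maximizer $\bar P\in K$ exists. I then claim $\bar P$ is Pareto optimal over all of $\Delta(A)$, not just over $K$. Suppose some $Q\in\Delta(A)$ Pareto dominates $\bar P$ with respect to $\pi$. Then $\pi_i(Q)\ge\pi_i(\bar P)\ge v_i$ for all $i$, so $Q\in K$. Also $W(Q)>W(\bar P)$, which contradicts maximality. Thus $\bar P\in\CO(\Gamma)$ and $\pi_i(\bar P)\ge v_i$. This step, keeping the domination test global while the maximization is restricted to $K$, is the only delicate point. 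It is handled because any dominating distribution automatically lies in $K$.

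For the second claim, I note that obedience $\delta_{-i}(a_{-i})=a_{-i}$ always belongs to $B_{-i}^{\bar P}$. The minimum defining $\pi_i(\bar P)$ is therefore at most the obedient value $\sum_a\bar P(a)u_i(a)$. Chaining the inequalities gives
\[
\sum_a\bar P(a)u_i(a)\ \ge\ \pi_i(\bar P)\ \ge\ v_i ,
\]
which is the stated weak improvement in ordinary expected payoffs.
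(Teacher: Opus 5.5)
Your proposal is correct and matches the paper's proof essentially step for step. Like the paper, you use Proposition~\ref{prop:collapse} to get $\pi(P^{\CE})$, maximize $W$ over the compact upper-level set of $\pi$ (closed by upper semicontinuity), observe that any global Pareto dominator must lie in that set, and use the admissibility of obedience to obtain $\sum_a \bar P(a)u_i(a)\ge\pi_i(\bar P)$.
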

	
	\begin{proof}
		By Proposition~\ref{prop:collapse},
		\[
		\pi_i(P^{\CE})
		=
		\sum_{a\in A}P^{\CE}(a)u_i(a)
		\]
		for every player $i$.
		
		Define
		\[
		S(P^{\CE})
		=
		\left\{
		Q\in\Delta(A):
		\pi_i(Q)\ge \pi_i(P^{\CE})
		\text{ for every }i
		\right\}.
		\]
		This set is nonempty because $P^{\CE}\in S(P^{\CE})$. By upper semicontinuity of each $\pi_i$, it is closed. Since $\Delta(A)$ is compact, $S(P^{\CE})$ is compact.
		
		Let $W(P)=\sum_i\pi_i(P)$. Since $W$ is upper semicontinuous, it attains a maximum on $S(P^{\CE})$. Let $\bar P$ be a maximizer. Then $\pi_i(\bar P)\ge\pi_i(P^{\CE})$ for every $i$.
		
		It remains to show that $\bar P$ is globally Pareto optimal. Suppose not. Then some $Q\in\Delta(A)$ weakly improves every component of $\pi(\bar P)$ and strictly improves at least one. Since $\bar P\in S(P^{\CE})$, this implies $Q\in S(P^{\CE})$. But then $W(Q)>W(\bar P)$, contradicting the maximality of $\bar P$ on $S(P^{\CE})$. Hence $\bar P\in\CO(\Gamma)$.
		
		Finally, obedience by all opponents is always admissible, so
		\[
		\pi_i(\bar P)
		\le
		\sum_{a\in A}\bar P(a)u_i(a).
		\]
		Combining the inequalities gives the result.
	\end{proof}

	At a correlated equilibrium, Proposition~\ref{prop:collapse} implies that guaranteed payoffs coincide with ordinary expected payoffs.  The set of distributions whose guaranteed-performance vectors weakly improve upon the correlated-equilibrium payoff vector is compact by upper semicontinuity.  Maximizing the aggregate guaranteed payoff over this set therefore selects a globally Pareto-optimal distribution.  Since obedience is always admissible in the definition of $\pi_i$, each player's ordinary expected payoff at the selected distribution is at least as large as their guaranteed payoff.  Thus the same distribution weakly improves the correlated-equilibrium payoff both in guaranteed-performance terms and in ordinary expected-payoff terms.
	
	\subsection{Two-player zero-sum games}
	
	\begin{proposition}[Two-player zero-sum games]\label{prop:zero-sum}
		Let $\Gamma$ be a finite two-player zero-sum game.  Player $1$'s payoff is $u:A_1\times A_2\to\mathbb R$, and player $2$'s payoff is $-u$.  Let
		\[
		v=
		\max_{p_1\in\Delta(A_1)}\min_{a_2\in A_2}u(p_1,a_2)
		=
		\min_{p_2\in\Delta(A_2)}\max_{a_1\in A_1}u(a_1,p_2)
		\]
		be the value.  Then every correlated equilibrium has payoff vector $(v,-v)$, and
		\[
		\CO(\Gamma)=\CE(\Gamma).
		\]
	\end{proposition}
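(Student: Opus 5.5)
The plan is to prove the two claims separately. First, every correlated equilibrium yields $(v,-v)$. Second, every $P\in\Delta(A)$ satisfies $\pi_1(P)\le v$ and $\pi_2(P)\le -v$. The equality $\CO(\Gamma)=\CE(\Gamma)$ will then follow by comparing guarantee vectors with the ceiling $(v,-v)$.

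\emph{Step 1 (CE payoff).} Let $P\in\CE(\Gamma)$ with $U(P)=\sum_a P(a)u(a)$. Summing player 1's correlated equilibrium inequalities over recommendations, with a constant deviation $b_1$, gives the coarse inequality $U(P)\ge u(b_1,P^{(2)})$ for every $b_1$. Here $P^{(2)}$ is the marginal of $P$ on $A_2$. Hence $U(P)\ge\max_{b_1}u(b_1,P^{(2)})\ge v$. Symmetrically, player 2's inequalities give $U(P)\le\min_{b_2}u(P^{(1)},b_2)\le v$. So $U(P)=v$, and by Proposition~\ref{prop:collapse} also $\pi(P)=(v,-v)$.

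\emph{Step 2 (universal ceiling).} Fix any $P$. Define $\delta_2$ as follows: at each $a_2$ with $P_2(a_2)>0$, let $\delta_2(a_2)$ be a maximizer over $b_2$ of player 2's unnormalized gain, equivalently a minimizer of $\sum_{a_1}P(a_1,a_2)u(a_1,b_2)$. Whenever the maximal gain is not strictly positive, set $\delta_2(a_2)=a_2$ instead. Then $\delta_2\in B_2^P$ by construction. For every $a_2$ the obedient term satisfies $\sum_{a_1}P(a_1,a_2)u(a_1,a_2)\ge\min_{b_2}\sum_{a_1}P(a_1,a_2)u(a_1,b_2)$. Therefore
\[
\pi_1(P)\le\sum_{a_2}\sum_{a_1}P(a_1,a_2)u(a_1,\delta_2(a_2))=\sum_{a_2}\min_{b_2}\sum_{a_1}P(a_1,a_2)u(a_1,b_2)\le\min_{b_2}u(P^{(1)},b_2)\le v .
\]
The middle inequality is the key point: a sum of minima is at most the minimum of the sum. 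This is exactly where recommendation-contingent deviations matter. The symmetric construction for player 1 gives $\pi_2(P)\le -v$.

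\emph{Step 3 (conclusion).} For $\CE\subseteq\CO$: a CE attains $(v,-v)$, and by Step 2 no distribution can exceed it in either coordinate, so no distribution dominates it. For $\CO\subseteq\CE$: let $P\in\CO(\Gamma)$. Since $\pi(P)\le(v,-v)$ componentwise, and a CE exists (for instance, the one induced by a Nash equilibrium) attaining $(v,-v)$, Pareto optimality forces $\pi(P)=(v,-v)$. Because obedience is admissible, $v=\pi_1(P)\le U(P)$ and $-v=\pi_2(P)\le -U(P)$, so $U(P)=v$.

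Now suppose player 2 had a strictly profitable deviation at some $a_2$. Then at that $a_2$ the obedient term strictly exceeds the minimum, and the display in Step 2 sharpens to $\pi_1(P)<\sum_{a}P(a)u(a)=U(P)=v$. This contradicts $\pi_1(P)=v$. The symmetric argument rules out profitable deviations for player 1. Hence $B_j^P(a_j)=\{a_j\}$ for all $j$ and $a_j$, which is precisely the correlated equilibrium condition, so $P\in\CE(\Gamma)$.

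The main obstacle is Step 2. The crucial choice is to let the opponent best-respond recommendation by recommendation, rather than commit to a single fixed deviation. The same construction then has to be reused for the strict inequality in Step 3, which is where care with the strict-profitability convention is needed.
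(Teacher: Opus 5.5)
Your proof is correct and follows essentially the paper's route. Both use the ceiling $\pi(P)\le(v,-v)$ from ``sum of minima $\le$ minimum of sums,'' the fact that correlated equilibria attain it, and the termwise strictness argument that forces a correlated optimin to satisfy the obedience constraints. The differences are minor: you get the correlated equilibrium payoff $v$ by summing the obedience constraints into coarse ones and invoking the minimax value, whereas the paper derives it from the ceiling together with Proposition~\ref{prop:collapse}. You also prove only the inequality $\pi_1(P)\le\sum_{a_2}\min_{b_2}\sum_{a_1}P(a_1,a_2)u(a_1,b_2)$, via an explicit admissible rule, where the paper asserts this as an exact formula with a brief informal justification.
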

	
	\begin{proof}
		Let $P\in\Delta(A_1\times A_2)$ be arbitrary, and write
		\[
		e(P)=\sum_{a_1,a_2}P(a_1,a_2)u(a_1,a_2)
		\]
		for player $1$'s expected payoff under obedience.
		
		First compute the correlated optimin performance in zero-sum form.  For player $1$, the only opponent is player $2$.  Since player $2$'s payoff is $-u$, a deviation by player $2$ is strictly profitable exactly when it lowers player $1$'s payoff.  Therefore, after a recommendation $a_2$, the worst admissible action for player $1$ is represented by the minimum of player $1$'s joint payoff contribution:
		\[
		\pi_1(P)
		=
		\sum_{a_2\in A_2}
		\min_{b_2\in A_2}
		\sum_{a_1\in A_1}P(a_1,a_2)u(a_1,b_2).
		\]
		This formula is valid even though the definition admits only strictly profitable deviations plus obedience.  If some action gives a lower payoff than obedience, it is strictly profitable for player $2$ and is admissible; if no action gives a lower payoff, obedience itself attains the minimum.
		
		Similarly, player $2$'s worst case comes from player $1$ choosing actions that maximize $u$ recommendation by recommendation.  Thus
		\[
		\pi_2(P)
		=
		-
		\sum_{a_1\in A_1}
		\max_{b_1\in A_1}
		\sum_{a_2\in A_2}P(a_1,a_2)u(b_1,a_2).
		\]
		
		We now bound the two components by the value.  For player $1$,
		\[
		\begin{aligned}
			\pi_1(P)
			&=
			\sum_{a_2}\min_{b_2}\sum_{a_1}P(a_1,a_2)u(a_1,b_2)\\
			&\le
			\min_{b_2}\sum_{a_2}\sum_{a_1}P(a_1,a_2)u(a_1,b_2)\\
			&=
			\min_{b_2}\sum_{a_1}P_1(a_1)u(a_1,b_2)\\
			&\le
			\max_{p_1\in\Delta(A_1)}\min_{b_2\in A_2}u(p_1,b_2)
			=v.
		\end{aligned}
		\]
		For player $2$, define
		\[
		B(P)=-\pi_2(P)
		=
		\sum_{a_1}\max_{b_1}\sum_{a_2}P(a_1,a_2)u(b_1,a_2).
		\]
		Then
		\[
		\begin{aligned}
			B(P)
			&\ge
			\max_{b_1}\sum_{a_1}\sum_{a_2}P(a_1,a_2)u(b_1,a_2)\\
			&=
			\max_{b_1}\sum_{a_2}P_2(a_2)u(b_1,a_2)\\
			&\ge
			\min_{p_2\in\Delta(A_2)}\max_{b_1\in A_1}u(b_1,p_2)
			=v.
		\end{aligned}
		\]
		Thus, for every correlated distribution $P$,
		\[
		\pi_1(P)\le v,
		\qquad
		\pi_2(P)\le -v.
		\]
		
		Now let $P^{\CE}$ be a correlated equilibrium.  By Proposition~\ref{prop:collapse},
		\[
		\pi(P^{\CE})=(e(P^{\CE}),-e(P^{\CE})).
		\]
		The bounds just proved imply
		\[
		e(P^{\CE})\le v
		\quad\text{and}\quad
		-e(P^{\CE})\le -v,
		\]
		which is equivalent to $e(P^{\CE})\ge v$.  Hence $e(P^{\CE})=v$, so every correlated equilibrium has payoff vector $(v,-v)$.
		
		We next prove the set equality.  First, every correlated equilibrium is correlated optimin.  Indeed, every correlated equilibrium has performance $(v,-v)$, and no distribution can have first component greater than $v$ or second component greater than $-v$.  Hence no distribution can Pareto dominate a correlated equilibrium.
		
		Conversely, let $P\in\CO(\Gamma)$.  Since finite games have correlated equilibria, there exists $P^{\CE}$ with performance $(v,-v)$.  If either $\pi_1(P)<v$ or $\pi_2(P)<-v$, then $P^{\CE}$ would weakly improve both components and strictly improve at least one, contradicting Pareto undominatedness of $P$.  Therefore
		\[
		\pi(P)=(v,-v).
		\]
		
		It remains to show that this equality forces $P$ to satisfy the correlated equilibrium obedience constraints.  Define
		\[
		A(P)=\pi_1(P),
		\qquad
		B(P)=-\pi_2(P).
		\]
		For each recommendation $a_2$ of player $2$,
		\[
		\min_{b_2}\sum_{a_1}P(a_1,a_2)u(a_1,b_2)
		\le
		\sum_{a_1}P(a_1,a_2)u(a_1,a_2),
		\]
		so summing over $a_2$ gives $A(P)\le e(P)$.  For each recommendation $a_1$ of player $1$,
		\[
		\max_{b_1}\sum_{a_2}P(a_1,a_2)u(b_1,a_2)
		\ge
		\sum_{a_2}P(a_1,a_2)u(a_1,a_2),
		\]
		so summing over $a_1$ gives $B(P)\ge e(P)$.  Since $A(P)=B(P)=v$, we obtain
		\[
		A(P)=e(P)=B(P)=v.
		\]
		The inequalities above are termwise before summing.  Equality of the sums therefore forces equality at every recommendation.  Hence obedience attains the recommendation-level minimum for player $2$ and the recommendation-level maximum for player $1$.  In payoff terms, player $2$ has no strictly profitable deviation after any recommendation, and player $1$ has no strictly profitable deviation after any recommendation.  Thus, $P$ is a correlated equilibrium.
	\end{proof}
	
	In a zero-sum game, a profitable deviation by one player is exactly a deviation that lowers the other player's payoff.  Correlated optimin therefore reduces to asking how much each player can guarantee recommendation by recommendation against profitable deviations by the opponent.  These guarantees cannot improve upon the value of the zero-sum game, and correlated equilibria already attain that value.  Consequently, correlated optimin and correlated equilibrium coincide in finite two-player zero-sum games.
	
	\section{Correlated optimin vs. correlated equilibrium: strict Pareto domination}
	
	Consider the two-player game
	\[
	\begin{array}{c|cc}
		& L & R\\
		\hline
		U & (0,1) & (1,0)\\
		D & (1,0) & (0,2)
	\end{array}
	\]
	A correlated distribution is denoted by
	\[
	P=
	\begin{pmatrix}
		x&y\\
		z&w
	\end{pmatrix},
	\]
	where $x=P(U,L)$, $y=P(U,R)$, $z=P(D,L)$, and $w=P(D,R)$. The probabilities satisfy $x,y,z,w\ge0$ and $x+y+z+w=1$.
	
	The correlated equilibrium inequalities are as follows.  If player $1$ is recommended $U$, obedience gives weighted payoff $y$, while switching to $D$ gives weighted payoff $x$.  Hence $y\ge x$. If player $1$ is recommended $D$, obedience gives $z$, while switching to $U$ gives $w$. Hence $z\ge w$. If player $2$ is recommended $L$, obedience gives $x$, while switching to $R$ gives $2z$.  Hence $x\ge 2z$.  If player $2$ is recommended $R$, obedience gives $2w$, while switching to $L$ gives $y$.  Hence $2w\ge y$.
	Therefore the CE system is
	\[
	y\ge x,
	\qquad
	z\ge w,
	\qquad
	x\ge2z,
	\qquad
	2w\ge y.
	\]
	Combining them gives
	\[
	y\ge x\ge2z\ge2w\ge y.
	\]
	All inequalities must bind.  Thus,  $y=x$, $x=2z$, and $z=w$.  Since $x+y+z+w=1$,
	\[
	2z+2z+z+z=6z=1.
	\]
	So $z=w=\frac16$ and $x=y=\frac13$. The unique correlated equilibrium is therefore
	\[
	P^{\CE}
	=
	\begin{pmatrix}
		1/3&1/3\\
		1/6&1/6
	\end{pmatrix}.
	\]
	Its payoff vector is
	\[
	\E_{P^{\CE}}[u_1]
	=0\cdot\frac13+1\cdot\frac13+1\cdot\frac16+0\cdot\frac16
	=\frac12,
	\]
	and
	\[
	\E_{P^{\CE}}[u_2]
	=1\cdot\frac13+0\cdot\frac13+0\cdot\frac16+2\cdot\frac16
	=\frac23.
	\]
	Thus,  $\E_{P^{\CE}}[u]=\left(\frac12,\frac23\right)$.
	
	Now consider
	\[
	Q=
	\begin{pmatrix}
		7/30&3/10\\
		7/30&7/30
	\end{pmatrix}
	\]
	Equivalently, $x=\frac7{30},
	y=\frac9{30},
	z=\frac7{30}$, and
	$w=\frac7{30}$. We first compute the correlated optimin performance of $Q$.
	
	For player $1$, the only strictly profitable deviation of player $2$ is from $L$ to $R$.  Indeed, conditional on recommendation $L$, player $2$ obtains payoff $1/2$ from obedience and $1$ from switching to $R$.  Conditional on recommendation $R$, obedience gives $7/8$ while switching to $L$ gives $9/16$, so from $R$ to $L$ is not strictly profitable.
	
	If player $2$ obeys, player $1$'s payoff is
	\[
	\E_Q[u_1]=y+z=\frac9{30}+\frac7{30}=\frac8{15}.
	\]
	If player $2$ switches from $L$ to $R$ and obeys $R$, then $(U,L)$ and $(D,L)$ become $(U,R)$ and $(D,R)$.  Since $x=z=7/30$, player $1$'s payoff remains
	\[
	x+y=\frac7{30}+\frac9{30}=\frac8{15}.
	\]
	Hence $\pi_1(Q)=\frac8{15}$.
	
	For player $2$, player $1$ has no strictly profitable deviation after either recommendation.  Conditional on $U$, obedience gives player $1$ payoff $9/16$ while switching to $D$ gives $7/16$.  Conditional on $D$, both obedience and switching to $U$ give $1/2$.  Thus, obedience is player $1$'s only admissible rule, and therefore
	\[
	\pi_2(Q)=\E_Q[u_2]=x+2w
	=\frac7{30}+\frac{14}{30}
	=\frac7{10}.
	\]
	
	Therefore
	\[
	\pi(Q)=\left(\frac8{15},\frac7{10}\right).
	\]
	Since $\frac8{15}>\frac12$ and $\frac7{10}>\frac23$, the guaranteed-performance vector of $Q$ strictly Pareto dominates the unique correlated equilibrium payoff.
	
	We now show that $Q$ is itself a correlated optimin.  Let
	\[
	P=
	\begin{pmatrix}
		x&y\\
		z&w
	\end{pmatrix}
	\]
	be an arbitrary correlated distribution.  Since obedience is always admissible,
	\[
	\pi_1(P)\le y+z,
	\qquad
	\pi_2(P)\le x+2w.
	\]
	Hence
	\[
	3\pi_1(P)+2\pi_2(P)
	\le
	3(y+z)+2(x+2w).
	\]
	Using $x+y+z+w=1$,
	\[
	3(y+z)+2(x+2w)
	=
	3+(w-x).
	\]
	
	If $w\le x$, this immediately gives
	\[
	3\pi_1(P)+2\pi_2(P)\le3.
	\]
	
	Suppose instead that $w>x$.  If $z\ge w$, then player $2$'s deviation from $L$ to $R$ is strictly profitable, and player $1$'s guaranteed payoff falls by at least $z-x\ge w-x$.  If $z<w$, then player $1$'s deviation from $D$ to $U$ is strictly profitable, and player $2$'s guaranteed payoff falls by at least $2w-z>w-x$.  In either case, the deviation losses offset the excess term $w-x$, so again
	\[
	3\pi_1(P)+2\pi_2(P)\le3.
	\]
	
	Thus, every correlated distribution satisfies
	\[
	3\pi_1(P)+2\pi_2(P)\le3.
	\]
	At $Q$,
	\[
	3\pi_1(Q)+2\pi_2(Q)
	=
	3\cdot\frac8{15}
	+
	2\cdot\frac7{10}
	=
	3.
	\]
	Since both weights are strictly positive, no distribution can weakly improve both components of $\pi(Q)$ while strictly improving one.  Therefore $Q$ is Pareto-undominated with respect to $\pi$, and hence
	\[
	Q\in\CO(\Gamma).
	\]
	
	Consequently,
	\[
	\pi(Q)
	=
	\left(\frac8{15},\frac7{10}\right)
	\gg
	\left(\frac12,\frac23\right)
	=
	\E_{P^{\CE}}[u].
	\]
	
	Thus, the game admits a correlated optimin that strictly Pareto dominates the unique correlated equilibrium in both payoffs and guaranteed-performance terms.

	\appendix
	
	\section{Appendix: Variants}\label{app:variants}
	
	This appendix illustrates variants that are useful for comparison but are not used as the main concept. The default version is preferred because it is the most natural one considering the original optimin criterion and correlated equilibrium.
	
	\subsection{Mixed-deviation correlated optimin}
	
	The mixed-deviation version lets a player switch after a recommendation to a mixed action.  For $q_j\in\Delta(A_j)$, write
	\[
	u_j(q_j,a_{-j})=
	\sum_{b_j\in A_j}q_j(b_j)u_j(b_j,a_{-j}).
	\]
	Let $e_{a_j}$ denote the degenerate mixed action on $a_j$.  Define
	\[
	M_j^P(a_j)
	=
	\{e_{a_j}\}
	\cup
	\left\{
	q_j\in\Delta(A_j):
	\sum_{a_{-j}\in A_{-j}}
	P(a_j,a_{-j})
	\bigl[u_j(q_j,a_{-j})-u_j(a_j,a_{-j})\bigr]>0
	\right\}.
	\]
	At zero-probability recommendations this reduces to $\{e_{a_j}\}$.  For player $i$, set
	\[
	M_{-i}^P=
	\prod_{j\ne i}\prod_{a_j\in A_j}M_j^P(a_j).
	\]
	An element $\sigma_{-i}\in M_{-i}^P$ is interpreted as a family of mixed recommendation-contingent rules $\sigma_j(a_j)\in M_j^P(a_j)$.
	
	We assume independent randomization across deviating players conditional on their private recommendations.  Thus, given $a_{-i}$, the probability of realized action profile $b_{-i}$ is
	\[
	\prod_{j\ne i}\sigma_j(a_j)(b_j).
	\]
	
	Define
	\[
	U_i(a_i,\sigma_{-i}(a_{-i}))
	=
	\sum_{b_{-i}\in A_{-i}}
	\left(\prod_{j\ne i}\sigma_j(a_j)(b_j)\right)
	u_i(a_i,b_{-i}),
	\]
	and
	\[
	\pi_i^{m}(P)
	=
	\inf_{\sigma_{-i}\in M_{-i}^P}
	\sum_{a\in A}P(a)U_i(a_i,\sigma_{-i}(a_{-i})).
	\]
	The infimum need not be attained because strict-gain sets in mixed-action simplices need not be closed.  The value is nevertheless finite: obedience is always admissible and payoffs are bounded.  A distribution is a \emph{mixed-deviation  correlated optimin} if it is Pareto optimal with respect to $\pi^{m}$; the set of such distributions is denoted $\CO^{m}(\Gamma)$.
	
	The next propositions record the two basic facts for this variant.  They are included here because the mixed admissible sets need not be closed, so one must use an infimum and an $\varepsilon$-optimal rule rather than a minimizing rule.
	
	\begin{proposition}[Mixed-deviation collapse at correlated equilibria]
		If $P\in\CE(\Gamma)$, then for every player $i$,
		\[
		\pi_i^{m}(P)=\sum_{a\in A}P(a)u_i(a).
		\]
	\end{proposition}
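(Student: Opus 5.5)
The plan mirrors the proof of Proposition~\ref{prop:collapse}: show that at a correlated equilibrium every mixed admissible set $M_j^P(a_j)$ collapses to $\{e_{a_j}\}$. Once that is established, $M_{-i}^P$ is a singleton consisting of the obedient rule. The infimum defining $\pi_i^{m}(P)$ is then taken over one point, where $U_i(a_i,\sigma_{-i}(a_{-i}))=u_i(a_i,a_{-i})$, so $\pi_i^{m}(P)=\sum_{a}P(a)u_i(a)$. No $\varepsilon$-optimal argument is needed here, because the admissible set is finite.

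The key step is the collapse, which uses linearity of the gain in $q_j$. Fix $j$ and $a_j$. If $P_j(a_j)=0$, the joint-gain expression vanishes for every $q_j$, so only $e_{a_j}$ is admissible.

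If $P_j(a_j)>0$, expand the gain for a mixed action:
\[
\sum_{a_{-j}}P(a_j,a_{-j})\bigl[u_j(q_j,a_{-j})-u_j(a_j,a_{-j})\bigr]
=\sum_{b_j\in A_j}q_j(b_j)\sum_{a_{-j}}P(a_j,a_{-j})\bigl[u_j(b_j,a_{-j})-u_j(a_j,a_{-j})\bigr].
\]
This uses $\sum_{b_j}q_j(b_j)=1$. Each inner sum is $\le 0$ by the joint-probability form of the correlated equilibrium inequalities. A convex combination of nonpositive numbers is nonpositive, so no $q_j$ is strictly profitable, and $M_j^P(a_j)=\{e_{a_j}\}$.

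The only point needing care is the reweighting identity displayed above, which is routine. The independence assumption across deviators plays no role, since every component of $\sigma_{-i}$ is degenerate.
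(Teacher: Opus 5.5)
Your proposal is correct and follows essentially the same route as the paper. Both write the mixed joint gain as a convex combination of the pure joint gains, which are nonpositive at a correlated equilibrium and zero off support, so each $M_j^P(a_j)$ collapses to $\{e_{a_j}\}$ and the infimum is taken over the obedient profile alone.
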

	
	\begin{proof}
		At a correlated equilibrium, every pure deviation after every positive-probability recommendation has weakly nonpositive gain.  At a zero-probability recommendation the joint gain is zero.  Hence for every $j,a_j,b_j$,
		\[
		\sum_{a_{-j}}P(a_j,a_{-j})
		\bigl[u_j(b_j,a_{-j})-u_j(a_j,a_{-j})\bigr]\le0.
		\]
		For a mixed action $q_j\in\Delta(A_j)$, the mixed gain is the convex combination of these pure gains:
		\[
		\sum_{a_{-j}}P(a_j,a_{-j})
		\bigl[u_j(q_j,a_{-j})-u_j(a_j,a_{-j})\bigr]
		=
		\sum_{b_j\in A_j}q_j(b_j)
		\sum_{a_{-j}}P(a_j,a_{-j})
		\bigl[u_j(b_j,a_{-j})-u_j(a_j,a_{-j})\bigr].
		\]
		A convex combination of weakly nonpositive numbers is weakly nonpositive.  Thus, no nonobedient mixed action is strictly profitable, and $M_j^P(a_j)=\{e_{a_j}\}$ for every $j,a_j$.  The only admissible mixed rule profile is obedience, so the performance equals ordinary expected utility.
	\end{proof}
	
	\begin{proposition}[Upper semicontinuity for mixed deviations]
		For every player $i$, the function $P\mapsto\pi_i^{m}(P)$ is upper semicontinuous on $\Delta(A)$.
	\end{proposition}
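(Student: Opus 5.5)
The plan is to adapt the proof of Proposition~\ref{prop:usc}. The one new point is that the minimum need not be attained, so we work with an $\varepsilon$-optimal admissible rule instead of a minimizing one.

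Let $P^m\to P$ in $\Delta(A)$ and fix $\varepsilon>0$. Because obedience is admissible and payoffs are bounded, $\pi_i^m(P)$ is finite. By the definition of the infimum there is $\sigma_{-i}\in M_{-i}^P$ with
\[
\sum_{a\in A}P(a)U_i(a_i,\sigma_{-i}(a_{-i}))\le \pi_i^m(P)+\varepsilon .
\]

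The key step is to show that $\sigma_{-i}\in M_{-i}^{P^m}$ for all large $m$. Fix $j\ne i$ and $a_j\in A_j$. If $\sigma_j(a_j)=e_{a_j}$, the component is admissible under every distribution. Otherwise $\sigma_j(a_j)=q_j$ is a fixed mixed action, and admissibility means
\[
g(P):=\sum_{a_{-j}}P(a_j,a_{-j})\bigl[u_j(q_j,a_{-j})-u_j(a_j,a_{-j})\bigr]>0 .
\]
Once $q_j$ is fixed, $g$ is a linear function of the coordinates of $P$, hence continuous. So $g(P^m)>0$ for all large $m$. Since there are finitely many pairs $(j,a_j)$, a common threshold exists. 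Note that finiteness of the rule set is not needed: we only ever check the single fixed rule $\sigma_{-i}$.

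For large $m$ we then have
\[
\pi_i^m(P^m)\le \sum_a P^m(a)U_i(a_i,\sigma_{-i}(a_{-i})).
\]
The right-hand side is linear in $P^m$, because $U_i(a_i,\sigma_{-i}(a_{-i}))$ is a fixed number for each $a$. It therefore converges to $\sum_a P(a)U_i(a_i,\sigma_{-i}(a_{-i}))\le\pi_i^m(P)+\varepsilon$. Hence $\limsup_m\pi_i^m(P^m)\le\pi_i^m(P)+\varepsilon$, and letting $\varepsilon\downarrow0$ gives upper semicontinuity.

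The main point to handle carefully is the non-attainment of the infimum, which the $\varepsilon$-optimal rule resolves. Apart from that, the persistence of the strict inequality for a fixed mixed deviation is immediate from linearity.
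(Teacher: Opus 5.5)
Your proof is correct and follows the paper's argument essentially step for step: an $\varepsilon$-optimal admissible rule, persistence of each nonobedient component's strict linear gain inequality under $P^m\to P$, then the limsup bound and $\varepsilon\downarrow 0$. The only blemish is notational: you use $m$ both as the sequence index and as the ``mixed'' superscript in $\pi_i^{m}$, which the paper avoids by writing the sequence as $P^r$.
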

	
	\begin{proof}
		Let $P^r\to P$.  We prove
		\[
		\limsup_{r\to\infty}\pi_i^{m}(P^r)
		\le
		\pi_i^{m}(P).
		\]
		Because the inner infimum need not be attained, fix $\varepsilon>0$ and choose an admissible mixed rule profile $\sigma_{-i}\in M_{-i}^P$ such that
		\[
		\sum_{a\in A}P(a)U_i(a_i,\sigma_{-i}(a_{-i}))
		\le
		\pi_i^{m}(P)+\varepsilon.
		\]
		We show that this same rule profile is admissible at all nearby $P^r$.
		
		Fix $j\ne i$ and $a_j\in A_j$.  If $\sigma_j(a_j)=e_{a_j}$, then this component is admissible under every distribution.  Otherwise, admissibility at $P$ implies the strict mixed-gain inequality
		\[
		\sum_{a_{-j}}P(a_j,a_{-j})
		\bigl[u_j(\sigma_j(a_j),a_{-j})-u_j(a_j,a_{-j})\bigr]>0.
		\]
		For fixed $\sigma_j(a_j)$, the left-hand side is linear and therefore continuous in $P$.  Hence the same strict inequality holds with $P^r$ in place of $P$ for all sufficiently large $r$.  Since there are finitely many pairs $(j,a_j)$, the whole rule profile is admissible under $P^r$ for all sufficiently large $r$.
		
		Therefore, for all sufficiently large $r$,
		\[
		\pi_i^{m}(P^r)
		\le
		\sum_{a\in A}P^r(a)U_i(a_i,\sigma_{-i}(a_{-i})).
		\]
		Taking the limit superior gives
		\[
		\limsup_r\pi_i^{m}(P^r)
		\le
		\sum_{a\in A}P(a)U_i(a_i,\sigma_{-i}(a_{-i}))
		\le
		\pi_i^{m}(P)+\varepsilon.
		\]
		Since $\varepsilon>0$ was arbitrary, upper semicontinuity follows.
	\end{proof}
	
	\begin{proposition}[Existence for mixed deviations]
		Every finite game has at least one mixed-deviation correlated optimin.
	\end{proposition}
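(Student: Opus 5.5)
The plan is to repeat the argument of Proposition~\ref{prop:existence}, with $\pi^m$ in place of $\pi$. The only new ingredient is the upper semicontinuity of $\pi_i^m$ proved just above. Define
\[
W^m(P)=\sum_{i\in N}\pi_i^{m}(P).
\]
Before maximizing, I will check that each $\pi_i^m$ is real-valued and bounded on $\Delta(A)$.

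\emph{Step 1: finiteness.} Let $\underline u=\min_{i,a}u_i(a)$ and $\bar u=\max_{i,a}u_i(a)$. Every admissible profile $\sigma_{-i}\in M_{-i}^P$ yields an expected payoff that is an average of values $u_i(a_i,b_{-i})$, so it lies in $[\underline u,\bar u]$. The set $M_{-i}^P$ is nonempty because obedience is always admissible. Hence $\pi_i^m(P)\in[\underline u,\bar u]$. This rules out the value $-\infty$, which matters because the infimum in the definition need not be attained.

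\emph{Step 2: a maximizer exists.} A finite sum of real-valued upper semicontinuous functions is upper semicontinuous, so $W^m$ is upper semicontinuous on $\Delta(A)$ by the preceding proposition. The simplex $\Delta(A)$ is compact and nonempty. By the Weierstrass theorem for upper semicontinuous functions, $W^m$ attains its maximum at some $\bar P$.

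\emph{Step 3: the maximizer is Pareto optimal.} Suppose some $Q$ satisfies $\pi_i^m(Q)\ge\pi_i^m(\bar P)$ for all $i$, with strict inequality for at least one $i$. Summing these finite real numbers gives $W^m(Q)>W^m(\bar P)$, which contradicts maximality. Hence $\bar P\in\CO^m(\Gamma)$, so this set is nonempty.

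There is no real obstacle here. The delicate point, non-attainment of the inner infimum, was already handled in the upper semicontinuity proof by the $\varepsilon$-optimal rule argument. The only care needed now is Step 1, so that the sum $W^m$ and the strict comparison in Step 3 are well defined.
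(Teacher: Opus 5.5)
Your proof is correct and follows the paper's proof essentially step for step. Each $\pi_i^m$ is finite because obedience is always admissible and payoffs are bounded; $W^m$ is upper semicontinuous by the preceding proposition and so attains its maximum on the compact simplex; and any maximizer is Pareto optimal, since a dominating distribution would strictly raise $W^m$. Your explicit bounds $[\underline u,\bar u]$ simply spell out the paper's one-line finiteness remark.
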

	
	\begin{proof}
		The admissible mixed rule set is nonempty because obedience is always admissible, and all payoff functions are bounded because the game is finite.  Hence each $\pi_i^{m}$ is finite-valued.  By the previous proposition, each component is upper semicontinuous.  Therefore
		\[
		W^m(P)=\sum_{i\in N}\pi_i^{m}(P)
		\]
		is upper semicontinuous on compact $\Delta(A)$ and attains a maximum.  As in the main proof, the sum is only a selection device.  A maximizer cannot be Pareto dominated with respect to $\pi^{m}$; otherwise the dominating distribution would give a strictly larger value of $W^m$.  Thus, a mixed-deviation correlated optimin exists.
	\end{proof}

	\begin{proposition}\label{prop:2x2-mixed-pure}
		Let each of two players have exactly two pure actions.  Then, for every $P\in\Delta(A)$ and every player $i$,
		\[
		\pi_i^{m}(P)=\pi_i(P).
		\]
		Consequently, the mixed-deviation and pure-deviation correlated optimin sets coincide.
	\end{proposition}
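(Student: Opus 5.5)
In a two-player game with $A_j=\{a_j,c_j\}$, the plan is to describe each mixed admissible set $M_j^P(a_j)$ explicitly and show it adds nothing beyond the pure admissible set $B_j^P(a_j)$ once player $i$ minimizes. Fix $P$, player $i$, and the opponent $j\ne i$; there is only one opponent, so independence across deviators plays no role. A mixed action at recommendation $a_j$ is $q_j=(1-t)e_{a_j}+t e_{c_j}$ with $t\in[0,1]$. Its joint gain is linear in $t$, namely $t\,g_j(a_j)$, where
\[
g_j(a_j)=\sum_{a_{-j}}P(a_j,a_{-j})\bigl[u_j(c_j,a_{-j})-u_j(a_j,a_{-j})\bigr].
\]
So if $g_j(a_j)\le 0$ then $M_j^P(a_j)=\{e_{a_j}\}$ and $B_j^P(a_j)=\{a_j\}$. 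If $g_j(a_j)>0$ then $M_j^P(a_j)$ is every $q_j$ with $t\in[0,1]$ (the point $t=0$ comes from obedience, $t>0$ from strict gain), which is the full simplex, while $B_j^P(a_j)=\{a_j,c_j\}$. In both cases $M_j^P(a_j)=\operatorname{conv}B_j^P(a_j)$, and this set is closed.

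Next I would use that player $i$'s objective $\sum_a P(a)U_i(a_i,\sigma_j(a_j))$ is separable across the recommendations $a_j$. It equals $\sum_{a_j}\sum_{a_i}P(a_i,a_j)\,u_i(a_i,\sigma_j(a_j))$. Each summand is affine in the mixed action $\sigma_j(a_j)$, and the choices at different $a_j$ are independent because $M_{-i}^P$ is a product. Hence the infimum over $M_{-i}^P$ is the sum over $a_j$ of the minima over $\operatorname{conv}B_j^P(a_j)$. An affine function on a polytope attains its minimum at a vertex, so each term's minimum is attained in $B_j^P(a_j)$. Therefore $\pi_i^m(P)=\pi_i(P)$, and the infimum is in fact attained.

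The second claim then follows at once: the vectors $\pi^m$ and $\pi$ are identical functions on $\Delta(A)$, so Pareto optimality with respect to either is the same condition, and $\CO^m(\Gamma)=\CO(\Gamma)$.

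The only delicate step is the equality $M_j^P(a_j)=\operatorname{conv}B_j^P(a_j)$. With more than two actions this fails: mixtures that put weight on unprofitable pure actions can still be strictly profitable, and the strict-gain set is not closed. The two-action structure removes both problems, because the gain is a single scalar multiple of $t$. I would also check explicitly the boundary case $g_j(a_j)=0$, which includes zero-probability recommendations. There both sets reduce to obedience, so the argument goes through.
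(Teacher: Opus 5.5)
Your proof is correct and follows essentially the same route as the paper. You parametrize the opponent's mixed action along the segment from obedience to the other pure action, and note that the gain is $t$ times the pure gain, so mixed admissibility reduces to pure admissibility; separability across the opponent's recommendations and affinity in $t$ then push the infimum to an endpoint. Your identification $M_j^P(a_j)=\mathrm{conv}\,B_j^P(a_j)$ slightly sharpens the paper's ``attained, or approached'' by showing the infimum is actually a minimum.
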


	\begin{proof}
		Fix player $i$ and let $j\ne i$ be the unique opponent.  After any recommendation $a_j$, let $b_j$ be the only other pure action.  Every mixed action has the form $q_j=(1-r)e_{a_j}+r e_{b_j}$, $r\in[0,1]$.
		
		The strict-profitability expression for this mixed action is
		\[
		\sum_{a_{-j}}P(a_j,a_{-j})
		\bigl[u_j(q_j,a_{-j})-u_j(a_j,a_{-j})\bigr]
		=
		r
		\sum_{a_{-j}}P(a_j,a_{-j})
		\bigl[u_j(b_j,a_{-j})-u_j(a_j,a_{-j})\bigr].
		\]
		Hence a nonobedient mixed deviation is strictly profitable exactly when the pure switch from $a_j$ to $b_j$ is strictly profitable.
		
		For fixed $a_j$, player $i$'s payoff is affine in $r$.  Therefore the infimum over admissible mixed deviations is attained, or approached, at one of the endpoints $r=0$ or $r=1$, corresponding to obedience and the pure switch.  Thus, recommendation by recommendation, mixed deviations generate the same worst-case payoff as pure deviations.
		
		Since there is only one opponent, the objective separates across that opponent's recommendations.  Hence $\pi_i^m(P)=\pi_i(P)$ for every $P$ and every player $i$.  The equality of the Pareto optimal sets follows immediately.
	\end{proof}
	
	\subsection{Inside and statewise-selector variants}
	
	The main definition uses an outside minimum: one global deviation profile is chosen against the whole distribution.  The inside variant evaluates each recommendation of player $i$ separately:
	\[
	\pi_i^{\delta,\inn}(P)
	=
	\sum_{a_i\in A_i}
	\min_{\delta_{-i}\in B_{-i}^P}
	\sum_{a_{-i}\in A_{-i}}
	P(a_i,a_{-i})u_i(a_i,\delta_{-i}(a_{-i})).
	\]
	The joint-probability form avoids assigning posteriors to zero-probability recommendations.  The inside criterion is more pessimistic because different recommendations of player $i$ may be evaluated using different minimizing rules.  Consequently the inside value can combine conditional worst cases that do not arise from one global profile of deviation behavior.
	
	A statewise selector for player $i$ is a function
	\[
	\alpha_{-i}:A_{-i}\to A_{-i},
	\qquad
	\alpha_{-i}(a_{-i})=(\alpha_j(a_{-i}))_{j\ne i},
	\]
	with admissibility condition
	\[
	\alpha_j(a_{-i})\in B_j^P(a_j)
	\quad\text{for every }j\ne i\text{ and every }a_{-i}\in A_{-i}.
	\]
	Let $\mathcal A_{-i}(P)$ be the finite nonempty set of such selectors.  The outside and inside selector performances are
	\[
	\pi_i^{A}(P)
	=
	\min_{\alpha_{-i}\in\mathcal A_{-i}(P)}
	\sum_{a\in A}P(a)u_i(a_i,\alpha_{-i}(a_{-i}))
	\]
	and
	\[
	\pi_i^{A,\inn}(P)
	=
	\sum_{a_i\in A_i}
	\min_{\alpha_{-i}\in\mathcal A_{-i}(P)}
	\sum_{a_{-i}\in A_{-i}}P(a_i,a_{-i})u_i(a_i,\alpha_{-i}(a_{-i})).
	\]
	These selector variants are formal robustness benchmarks.  They are less behaviorally conservative because $\alpha_j(a_{-i})$ may depend on the full vector $a_{-i}$ even though player $j$ privately observes only $a_j$.
	
	\subsection{Coarse-correlated optimin}
	
	A coarse-correlated variant checks profitability ex ante rather than recommendation by recommendation.  For $b_j\in A_j$, define the ex ante pure gain
	\[
	H_j(P;b_j)=
	\sum_{a\in A}P(a)
	\bigl[u_j(b_j,a_{-j})-u_j(a_j,a_{-j})\bigr].
	\]
	Let $\iota_j(a_j)=a_j$ be obedience and let $\kappa_j^{b_j}(a_j)=b_j$ be the constant rule.  Define
	\[
	\mathcal C_j^p(P)=
	\{\iota_j\}
	\cup
	\{\kappa_j^{b_j}:b_j\in A_j,
	H_j(P;b_j)>0\},
	\qquad
	\mathcal C_{-i}^p(P)=\prod_{j\ne i}\mathcal C_j^p(P).
	\]
	The pure coarse-correlated outside performance is
	\[
	\pi_i^{\mathrm{cc},p}(P)
	=
	\min_{\gamma_{-i}\in\mathcal C_{-i}^p(P)}
	\sum_{a\in A}P(a)u_i(a_i,\gamma_{-i}(a_{-i})).
	\]
	If $P$ is a coarse correlated equilibrium, then $H_j(P;b_j)\le0$ for every $j,b_j$, so only obedience is admissible and this performance equals ordinary expected utility.
	
	The pure coarse-correlated performance is upper semicontinuous by the same finite strict-inequality argument as in the main proof.  To see this explicitly, let $P^m\to P$ and choose a minimizing rule profile $\gamma_{-i}\in\mathcal C_{-i}^p(P)$.  If a component $\gamma_j$ is obedience, it remains admissible under every $P^m$.  If $\gamma_j$ is a constant rule $\kappa_j^{b_j}$, then admissibility at $P$ means $H_j(P;b_j)>0$.  Since $H_j(\cdot;b_j)$ is linear in $P$, the same strict inequality holds for all sufficiently large $m$.  Hence the whole minimizing rule remains admissible near $P$, and the same limsup argument proves upper semicontinuity.  Compactness of $\Delta(A)$ then gives existence of a pure coarse-correlated  optimin distribution by maximizing the sum of the performance components. The coarse correlated optimin can be extended to mixed deviations analogously.

\end{document}